%
%
%
%

\documentclass[runningheads,a4paper]{llncs}

\usepackage{amsmath,amssymb,amsfonts}
\setcounter{tocdepth}{3}
\usepackage{graphicx}

\usepackage{epsf}
\usepackage{tabularx,ragged2e,booktabs}

\usepackage{tikz}
\usepackage{url}
\usepackage{mathtools}

\usetikzlibrary{arrows,automata}
\usepackage[ruled,vlined,linesnumbered]{algorithm2e}

\usepackage{times}
\usepackage{array}
\usepackage{multirow}

\newcommand{\reals}{\mathbb{R}}
\newcommand{\naturals}{\mathbb{N}}

\newcommand{\MM}{\mathcal{M}}
\newcommand{\PP}{\mathbf{P}}
\newcommand{\UU}{\mathbf{U}}
\newcommand{\HC}{\textsc{HyChecker}}
\newcommand{\Beta}{\mathrm{B}}

\makeatletter
\newcommand{\distas}[1]{\mathbin{\overset{#1}{\kern\z@\sim}}}%
\newsavebox{\mybox}\newsavebox{\mysim}
\newcommand{\distras}[1]{%
	\savebox{\mybox}{\hbox{\kern3pt$\scriptstyle#1$\kern3pt}}%
	\savebox{\mysim}{\hbox{$\sim$}}%
	\mathbin{\overset{#1}{\kern\z@\resizebox{\wd\mybox}{\ht\mysim}{$\sim$}}}%
}
\makeatother

\newcommand\numberthis{\addtocounter{equation}{1}\tag{\theequation}}

\DeclareMathOperator*{\argmin}{arg\,min}

\usepackage{url}
\urldef{\mailsa}\path|{pingfan_kong, xiaohong_chen, sunjun, jingyi_wang}@sutd.edu.sg|
	\urldef{\mailsb}\path|sunmeng@math.pku.edu.cn|
	\urldef{\mailsc}\path|liyi_math@pku.edu.cn|

\begin{document}
	
	\mainmatter  
	
	\title{Towards Concolic Testing for Hybrid Systems}
	
	\titlerunning{Lecture Notes in Computer Science}
	
	%
	%
	\author{
		Pingfan Kong$^1$ \and Yi Li$^2$ \and Xiaohong Chen$^{1,3}$ \and
		Jun Sun$^1$ \and Meng Sun$^2$ \and Jingyi Wang$^1$}

	\institute{$^1$Singapore University of Technology and Design, Singapore~~~~~$^2$Peking University, China\\
		$^3$University of Illinois at Urbana-Champaign, United States\\
		}
	
	%
	%
	
	\toctitle{Analyzing Hybrid Systems with Concolic Sampling}
	\tocauthor{Authors' Instructions}
	\maketitle

\begin{abstract}
Hybrid systems exhibit both continuous and discrete behavior. Analyzing hybrid systems is known to be hard. 
Inspired by the idea of concolic testing (of programs), we investigate whether we can combine random sampling and symbolic execution in order to effectively verify hybrid systems. We identify a sufficient condition under which such a combination is more effective than random sampling. Furthermore, we analyze different strategies of combining random sampling and symbolic execution and propose an algorithm which allows us to dynamically switch between them so as to reduce the overall cost. Our method has been implemented as a web-based checker named \textsc{HyChecker}. \textsc{HyChecker} has been evaluated with benchmark hybrid systems and a water treatment system in order to test its effectiveness. 
\end{abstract}




\section{Introduction} \label{intro}
Hybrid systems are ever more relevant these days with the rapid development of the so-called cyber-physical systems and Internet of Things. Like traditional software, hybrid systems rely on carefully crafted software to operate correctly. Unlike traditional software, the control software in hybrid systems must interact with a continuous environment through sensing and actuating. Analyzing hybrid systems automatically is highly nontrivial. 
With a reasonably precise model of the entire system (e.g., in the form of a hybrid automaton), analyzing its behaviors (e.g., answering the question whether the system would satisfy a safety property) is challenging due to multiple reasons. Firstly, the dynamics of the environment, often composed of ordinary differential equations (ODE), is hard to reason about. For instance, there may not be closed form mathematical solutions for certain ODE. Secondly, unlike in the setting of traditional model checking problems, the variables in the hybrid models are often of real type and the (mode) transitions are often guarded with propositional formulas over real variables. There have been theoretical studies on the complexity of analyzing hybrid systems. For instance, it has been proved that non-trivial verification and control problems on non-trivial nonlinear hybrid systems are undecidable~\cite{DBLP:conf/lics/Henzinger96,DBLP:journals/jcss/HenzingerKPV98}. As a result, researchers have proposed to either work on approximate models of hybrid systems~\cite{DBLP:conf/tacas/HenzingerM00,DBLP:journals/fmsd/HahnHHK13}, or adopt approximate methods and tools on the hybrid system models~\cite{benjaminsmc,barbot2012importance,barbot2012coupling}.

One line of research (which we believe is relevant) is on analyzing the behaviors of hybrid systems through \emph{controlled} sampling. One example of those sampling-based methods is~\cite{benjaminsmc}. The idea is to approximate the behavior of a hybrid system probabilistically in the form of \emph{discrete time Markov chains} (DTMC). The complex dynamics in hybrid automata model is approximated using numeric differential equations solvers, and the mode transitions are approximated by probabilistic transition distributions in Markov chains. 
Afterwards, methods like hypothesis testing can be applied to the Markov chain to verify, probabilistically, properties against the original hybrid model. 

While sampling-based methods like~\cite{benjaminsmc} are typically more scalable, there are limitations. Arguably, the most important one is that random sampling does not work well when the system contains \emph{rare events}, i.e., events which by definition are unlikely to occur through random sampling. When systems get complicated, every event becomes rarer in a way. Existing remedies for this problem include importance sampling~\cite{barbot2012importance} and importance splitting~\cite{DBLP:conf/cav/JegourelLS13}, which work by essentially increasing the probability of the rare events. Both approaches are however useful only in certain limited circumstances.

One potential remedy for the problem is concolic testing, which is a technique proposed for analyzing programs~\cite{Godefroid:2005:DDA:1064978.1065036,sen2007concolic}.
The idea is: if random sampling fails to fire certain transitions in certain state (i.e., a potential rare event), we apply symbolic execution to generate the specific inputs which would trigger the transition or to show that the transition is infeasible. In this work, we investigate the possibility of applying concolic testing to hybrid systems. In particular, we study two fundamental questions. One is under what condition combining random sampling and symbolic execution is beneficial, i.e., given a property, is it guaranteed to find a counterexample with a smaller number of samples? The other is, among different strategies of combining random sampling and symbolic execution (i.e., when and how to apply symbolic execution), how do we define and identify the more effective strategies?
We remark that the latter question is particularly relevant to the analysis of hybrid systems as symbolic execution for hybrid automata is often very time consuming and thus a good strategy should perhaps be: applying symbolic execution as minimum as possible. Based on the answers, we then design an algorithm which adopts a strategy to dynamically switch between random sampling and symbolic execution. Intuitively, it works by continuously estimating whether certain transition is rare or not and applying symbolic execution only if the transition is estimated to be rarer than certain threshold. Furthermore, the threshold is calculated according to a cost model which estimates the cost of symbolic execution using certain constraint solver. Our method has been implemented as a self-contained web-based checker named \textsc{HyChecker} and evaluated with benchmark hybrid systems as well as a water treatment system in order to test its effectiveness. 


The remainders of the paper are organized as follows. In Section~\ref{sec:HA&DTMC}, we define a DTMC interpretation of hybrid system models. In Section~\ref{approach}, we view symbolic execution as a form of importance sampling and establish a sufficient condition for importance sampling to be beneficial. In Section~\ref{sec:strategy}, we discuss strategies on combining random sampling and symbolic execution. In Section~\ref{experiments}, we present our implementation and evaluate its effectiveness. In Section \ref{sec:relatedworks}, we conclude and review related work. 


\section{A Probabilistic View}\label{sec:HA&DTMC}
In this section, we present a probabilistic interpretation of hybrid system models, which provides the foundation for defining and comparing the effectiveness of random sampling, symbolic execution or their combinations.
In this work, we assume that hybrid systems are modelled as hybrid automata~\cite{henzinger2000theory}. The basic idea of hybrid automata is to model different discrete states in a hybrid system as different \emph{modes} and use differential equations to describe how  variables in the system evolve through time in the modes. For simplicity, we assume the differential equations are in the form of ordinary differential equations~(ODEs).
\begin{definition} A \emph{hybrid automaton} is a tuple ${\cal H} = (Q,V,q_0,I,\{f_q\}_{q\in Q},\{g_{(q,p)}\}_{\{q,p\}\subseteq Q})$ such that $Q$ is a finite set of modes; $V$ is a finite set of state variables; $q_0 \in Q$ is the initial mode; $I \subseteq \reals^n$ is a set of initial values of the state variables; $f_q$ for any $q \in Q$ is an ODE describing how variables in $V$ evolve through time at mode $q$; and $g_{(q,p)}$ for any $q,p\in Q$ is a guard condition on transiting from mode $q$ to mode $p$.
\end{definition}
For simplicity, we often write $q\xrightarrow{g}p$ to denote $g_{(q,p)}$. For example, the hybrid automaton shown on the left of Figure~\ref{fig:osc_automata} models an underdamped oscillatory system~\cite{gordon2007}, such as a spring-mass or a simple pendulum with a detector that raises an alarm whenever the displacement $x$ exceeds the threshold $a$. 
The initial displacement $x(0)=0$, while its tendency to deviate from the equilibrium $x'(0) \in [0,2\pi]$. An alarm is raised when the system enters mode $q_e$, which is reachable only through the transition $q_0 \xrightarrow{x>a} q_e$.

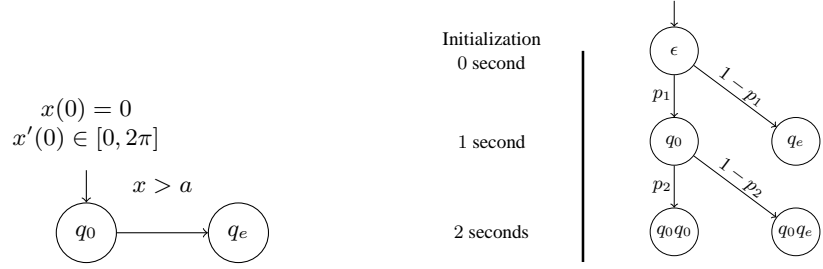
\begin{figure}[t]
	\scalebox{1.0}{
		\begin{tikzpicture}[->, node distance = 2cm,scale = 1]
		\tikzstyle{every node} = [align = center,minimum size = 1.2 cm];
		\node [initial by arrow,initial text={$x(0)=0$\\$x'(0)\in[0,2\pi]$},initial where = above,state,] (INIT){$q_0$};
		\node [state](T1)[right of = INIT]{$q_e$};
		
		\path
		(INIT)edge node [above=0.01cm] {$x>a$}(T1);
		
		\end{tikzpicture}
	}
~~~~~~~~~~~~~~~~~~~~~~~~
	\scalebox{0.8}{
		\begin{tikzpicture}
		
    	\draw[very thick] (-1.5,0) -- (-1.5,-3.5);
    	
    	\draw (-3,0) node [align = center]
    	    {Initialization\\
    	    	$0$ second};
    	\draw (-3,-1.5) node {$1$ second};
    	\draw (-3,-3) node {$2$ seconds};
	
		\tikzstyle{every node} = [inner sep = 1pt];
		\node [initial by arrow,state,initial where = above, initial text = {}] at (0,0) (INIT){$\epsilon$};
		\node [state] at (0,-1.5) (T1)  {$q_0$};
		\node [state] at (2,-1.5)  (T1e)   {$q_e$};
		\node [state] at (0, -3) (T2)   {$q_0q_0$};
		\node [state] at (2,-3)  (T2e) {$q_0q_e$};
		
		\path[->]
		(INIT)edge node [left] {$p_1$} (T1)
		(INIT) edge node [above, rotate=-36.9] {$1-p_1$} (T1e)
		(T1) edge node [left] {$p_2$} (T2)
		(T1) edge node [above, rotate=-36.9] {$1-p_2$} (T2e)
		;

		\end{tikzpicture}
	}
	\caption{An oscillatory system: the ODE at $q_0$ is $x''+x'+4\pi^2 x = 0$ and the one at $q_e$ is $x'=0$} 
	\label{fig:osc_automata}
\end{figure}

Next, we define the semantics of a hybrid automaton in the form of an infinite-state labeled transition system (LTS).
\begin{definition} Let ${\cal H} = (Q,V,q_0,I,\{f_q\}_{q\in Q},\{g_{(p,q)}\}_{\{p,q\}\subseteq Q})$ be a hybrid automaton. The semantics of ${\cal H}$, written as $sem({\cal H})$, is an LTS $(S, S_0, T, \rightarrow)$, where $S = \{(q,v) \mid q \in Q \text{ and } v\colon V \to \reals^n\}$ is the set of all (concrete) states; $S_0 = \{(q_0,v) \mid v \in I\}$ is the set of initial states; $T = \reals_+ \cup \{\epsilon\}$ is the set of transition labels, where $\epsilon$ is a label for all discrete jumps; and $\rightarrow \subseteq S \times T \times S$ contains two sets of transitions. One is time transitions, i.e., $(q,v)\xrightarrow{t}(q,u)$ if there exists a solution $\xi$ to the differential equation $\mathrm{d}V/\mathrm{d}t=f_q(V)$ such that $\xi(0)=v$ and $\xi(t)=u$. The other is jumps, i.e., $(q,v)\xrightarrow{\epsilon}(p,v)$ where there exists a transition $q\xrightarrow{g}p$ in $\cal H$ such that $v$ satisfies $g$.
\end{definition}
A finite \emph{run} $\rho$ of $\cal H$ is a finite sequence of transitions of $sem({\cal H})$.
Since we are investigating random sampling and symbolic execution (both of which are limited to finite runs), we focus on runs of \emph{bounded} length. For simplicity, we assume that all finite runs can be extended to an infinite non-Zeno run (such that time elapses unboundedly~\cite{henzinger2000theory}). It is straightforward to see that there always exists a time unit $\Delta t > 0$ such that at most one jump (i.e., $\epsilon$-transition) occurs with $\Delta t$ time units. In the following, we simply assume that $\Delta t$ is defined as one time unit. 
As a result, by observing the system mode at the end of every time unit, we can obtain a \emph{trace} of $\cal H$ as $\pi =  q_0q_1\dots q_k $, i.e., the sequence of modes observed during the run. We remark that if there is no jump during the time unit, the same mode is observed.

In the following, we focus on reachability analysis of certain modes~\cite{benjaminsmc}, i.e., certain modes in $\mathcal{H}$ are considered negative and we would like to check if any of them is reachable.
We remark that the verification problem of properties expressed in BLTL formula~\cite{DBLP:journals/apal/Kamide12} can be reduced to reachability analysis~\cite{benjaminsmc}. For instance, in the example shown in Figure~\ref{fig:osc_automata}, the safety property is reduced to whether the negative mode $q_e$ is reachable or not (with certain time). A trace is positive if it contains no negative mode. It is negative (a.k.a.~a counterexample) if it contains at least one negative mode. 

%

Next, we introduce a Markov chain interpretation of $\cal H$, adopted from~\cite{benjaminsmc}. Without loss of generality, we assume a uniform probability distribution on all initial states. This uniform distribution naturally induces a probability distribution over the traces of the system. 
Recall that a transition $q\xrightarrow{g}p$ of $\cal H$ can be fired only when $g$ is satisfied. Suppose the system is in the state $(q,v)$ initially and becomes $(q,v_t)$ after a time transition of $t$. If $v_t$ satisfies $g$,  the transition is enabled. 
We denote the set of all points in time within $(0,1)$ when the mode transition $q\xrightarrow{g}p$ is enabled as
\begin{equation}
\label{eq:timewindow}
T_q(v,g) = \{t \in (0,1) \mid \theta_q(v,t) \text{ satisfies } g\},
\end{equation}
where $\theta_q(v,\cdot)$ is the solution of the ODE at mode $q$ with the initial value $v$.
If the transition is fired at some time point $t \in T_q(v,g)$, the following state is observed after 1 time unit: $(p,\theta_p(v_t,1-t))=(p,\theta_p(\theta_q(v,t),1-t))$.
That is, the new mode is $p$ and the variables evolve according to the ODE at mode $q$ for $t$ time unit and then according to the ODE at mode $p$ for $1-t$ time unit.
For simplicity, we write $v_{q,p}(v,t)$ to denote the variable state reached from state $(q,v)$ by firing transition $q\xrightarrow{g}p$ at time $t$, i.e., $v_{q,p}(v,t) = \theta_p(\theta_q(v,t),1-t)$. Furthermore, we write $ v_{q,p}(v)$ to denote the set of all variable states reached from state $(q,v)$ by firing transition $q\xrightarrow{g}p$ at any time the transition is enabled, i.e., $v_{q,p}(v) = \{v_{q,p}(v,t)\mid t\in T_q(v,g)\}$.

By our assumption on the uniform random sampling,
there is a uniform distribution over $T_q(v,g)$. This uniform distribution, denoted as $\UU(T_q(v,g))$, naturally induces the following probability distribution over $v_{q,p}(v)$
\begin{equation}\label{eq:py}
\PP(Y) = \int_{t \in T_q(v,g)}\frac{[v_{q,p}(v,t)\in Y]}{\Vert T_q(v,g) \Vert}\mathrm{d}t
\end{equation}
for any $Y \subseteq v_{q,p}(v)$, where $[\  \cdot\ ]$ is the Iverson bracket~\cite{Iverson1962} and $\Vert \cdot \Vert$ is the Lebesgue measure~\cite{Lebesgue1902}. Intuitively, if initially the system is in the state $(q,v)$, we obtain a probability distribution over all possible states after taking the transition.

Next, we generalize the result so as to compare the probability of taking different transitions from different initial states. 
We assume a probability space $(X,\PP)$ where $X \subseteq \reals^n$, and the automaton $\cal H$ starts from a state $(q,v)$ with $v \sim \PP$. Let $q \xrightarrow{g_i} p_i$ where $i \in \{1,\dots,m\}$ be the transitions from $q$. Given an initial state $(q,v)$, the time window in which the transition to $p_i$ is enabled is $T_q(v,g_i)$. We assume that the system does not favor certain transitions and the probability of taking a transition is proportional to the size of the time window in which that transition is enabled. In other words, the probability of taking the transition $q\xrightarrow{g_i}p_i$ from state $(q,v)$ is defined as $p_{q,p_i}(v) = {\Vert T_q(v,g_i) \Vert}/{\sum_{j=1}^{m}\Vert T_q(v,g_j) \Vert}$.
According to the law of total probability, we have
\begin{equation}\label{eq:transdist}
p_{q,p_i} = \int_{v\in X}\frac{\Vert T_q(v,g_i) \Vert}{\sum_{j=1}^{m}\Vert T_q(v,g_j) \Vert} \mathrm{d}\PP(v).
\end{equation}
Furthermore, assume the transition $q \xrightarrow{g_i}p_i$ is fired. Given the condition that $v$ is a fixed $v_0$, we know the conditional probability distribution over $v_{q,p}(v_0)$, for any $Y \subseteq v_{q,p}(v_0)$, is defined as:
\begin{equation*}
\PP(Y\mid v=v_0) = \int_{t \in T_q(v_0,g)}\frac{[v_{q,p}(v_0,t)\in Y]}{\Vert T_q(v_0,g) \Vert}\mathrm{d}t.
\end{equation*}
By the law of total probability, for any $Y \subseteq v_{q,p}(X)$, we have
\begin{align*}
\PP(Y) &= \int_{v\in X}\PP(Y \mid v) \mathrm{d}\PP(v)\\
&= \int_{v\in X}\int_{t \in T_q(v,g)}\frac{[v_{q,p}(v,t)\in Y]}{\Vert T_q(v,g) \Vert}\mathrm{d}t\mathrm{d}\PP(v)\\
&= \int_{v\in X}\int_{0}^{1}\frac{[t \in T_q(v,g) \land v_{q,p}(v,t)\in Y]}{\Vert T_q(v,g) \Vert}\mathrm{d}t\mathrm{d}\PP(v). \numberthis \label{eq:statedist}
\end{align*}
Equations~\eqref{eq:transdist} and \eqref{eq:statedist} effectively identify a discrete-time Markov chain~(DTMC).

\begin{definition}\label{def:MC}
Let ${\cal H} = (Q,V,q_0,I,\{f_q\}_{q\in Q},\{g_{(p,q)}\}_{\{p,q\}\subseteq Q})$ be a hybrid automaton, and $K$ be a bound of trace length. The DTMC associated with $\cal H$ is a tuple ${\cal M}_{\cal H} = (S, u_0, Pr)$ where a node in $S$ is of the form $(q, X, \PP_X)$ where $q \in Q$ is a mode, $X$ is the set of values for $V$ and $\PP_X$ is a probability distribution of the values in $X$; the root $u_0 = (q_0, I, \UU_I)$ where $\UU_I$ is the uniform distribution over $I$; and for any $(q, X, \PP_X) \in S$, the transition probability $Pr((q, X, \PP_X), (p, v_{q,p_i}(X), \PP_i))$, where the probability distribution $\PP_i$ is defined as in equation~\eqref{eq:statedist}.
\end{definition}
We remark that ${\cal M}_{\cal H}$ abstracts away the complicated ODE in $\cal H$ and replaces the guarded transitions with probabilistic transitions. A \emph{path} of ${\cal M}_{\cal H}$ with non-zero probability always corresponds to a trace of $\cal H$~\cite{benjaminsmc}. The partition of positive and negative traces in $\cal H$ naturally induces a partition of positive and negative paths in ${\cal M}_{\cal H}$. Notice that ${\cal M}_{\cal H}$ is by construction in the form of a tree. The degree of the tree is bounded by the number of modes in $\cal H$, and its depth is bounded by $K$, i.e., the bound on trace length.

For instance, following the above discussion, we can construct the DTMC of the oscillatory system shown on the right of Figure~\ref{fig:osc_automata}. The root node is $s_0 = (q_0,I,\UU_I)$ where $I = \{0\} \times [0,2\pi]$ and $\UU_I$ is the uniform distribution over $I$. There is one outgoing transition $q_0 \to q_e$ at mode $q_0$. Thus $s_0$ has two children nodes $s_1$ and $s_2$, where $s_1$ represents automaton taking the transition $q_0\to q_e$ in the first second, and $s_2$ represents automaton staying in mode $q_0$. For this simple example, we can analytically compute the transition probability, e.g., $p_1$ and $p_2$ shown in Figure~\ref{fig:osc_automata}. In general it is difficult. 

\section{Symbolic Execution as a Form of Importance Sampling}\label{approach}
In this section, we analyze the effectiveness of random sampling and symbolic execution based on the DTMC interpretation of $\cal{H}$ developed in the previous section. In particular, we review symbolic execution as a form of importance sampling~\cite{Veach1995}, which intuitively speaking alters the probability distribution of ${\cal M}_{\cal H}$ in certain way so that a negative path is more likely to be sampled. In the following, we first define a way of measuring the effectiveness of random sampling, symbolic execution and possibly other sampling methods.

\subsection{Bayesian Inference}\label{sec:beyinf}
Recall that traces of $\cal H$ are partitioned into either positive trace, denoted as $\Pi^+$, or negative traces, denoted as $\Pi^-$. The probability of the system exhibiting a negative trace is called the \emph{error probability} and is denoted as $\theta = \PP(\Pi^-)$. Intuitively, after observing some sample traces (obtained either through random sampling or symbolic execution), we gain certain information on $\theta$.
Formally, we investigate the following questions: (1) how do we claim that the error probability $\theta$ is bounded by certain tolerance level $\delta$ and (2) how do we measure the confidence of the claim?

We answer the questions based on statistical inference. Intuitively,
if we see many negative trace samples, we conclude \emph{with certain confidence} that the system is likely to have an error probability that is larger than the tolerance level $\delta$. If we identify few or even no negative traces, we conclude \emph{with certain confidence} that the system is likely to have an error probability within the tolerance level $\delta$. 
Formally, let random variable $X$ denote whether a sample trace is positive or negative, i.e., $\PP(X = 1)$ is the error probability $\theta$. Let $B(N,\theta)$ denote the binomial distribution with parameters $N \in \naturals$ and $\theta \in [0,1]$. We have $X \sim B(1,\theta)$. Given $N$ independent and identically distributed sample traces, the number of negative traces is: $m = X_1 + X_2 + \dots X_N \sim B(N,\theta)$.
Initially, before witnessing any sample trace, we may only estimate the value of $\theta$ based on historical data. 
We thus assume a \emph{prior knowledge} of $\theta$ in the form of a \emph{prior distribution} $f(\theta)$. If no historical data are available, we set the prior distribution to be a non-informative one.
In the following, for simplicity, we adopt the non-informative prior distribution $f(\theta) \equiv 1$ where $\theta \in [0,1]$. 

According to the Bayesian law, the \emph{post distribution} of $\theta$ after witnessing $m$ negative samples and $n = N - m$ positive samples is defined as follows.
\begin{align*}
f(\theta \mid n,m) = \frac{f(\theta)f(n,m\mid\theta)}{\displaystyle\int_{0}^{1} f(\theta)f(n,m\mid \theta)\mathrm{d}\theta} = \frac{ \theta^m (1-\theta)^n}{ \Beta (m+1,n+1)}
\end{align*}
where $\Beta(\ \cdot\ ,\ \cdot\ )$ is the Beta function~\cite{Abramowitz1974}.
The \emph{confidence} we have about the claim that $\theta < \delta$, denoted as $c(n,m,\delta)$, can be defined naturally as the probability of $\theta < \delta$ conditioned on observing the negative and positive samples. Formally,
\begin{align*}
c(n,m,\delta) = \int_0^\delta f(\theta \mid n,m) \mathrm{d}\theta  = \frac{\Beta(\delta;m+1,n+1)}{\Beta (m+1,n+1)}
\end{align*}
where $\Beta(\delta\ ;\  \cdot\ ,\ \cdot \ )$ is the incomplete Beta function~\cite{Abramowitz1974}. 

The following proposition shows that our definition of confidence is consistent with our intuition, i.e., the more positive samples we observe, the more confidence we have.
\begin{proposition}\label{prop:conf2one}
	For any tolerance $0<\delta<1$, $c(n,0,\delta) \to 1$ as $n \to \infty$; and for any $m > 0$,
	$c(n,m,\delta) \to 1$ as $n/m \to \infty$.
\end{proposition}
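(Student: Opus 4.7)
The plan is to reinterpret the confidence $c(n,m,\delta)$ as a Beta-distribution tail probability and then exhibit concentration of that distribution below $\delta$. Concretely, observe that
\[
c(n,m,\delta) \;=\; \int_0^\delta \frac{\theta^m(1-\theta)^n}{\Beta(m+1,n+1)}\,\mathrm{d}\theta \;=\; \Pr_{\theta \sim \text{Beta}(m+1,n+1)}[\theta \le \delta].
\]
So it suffices to show that under the stated limits, the posterior mass of $\text{Beta}(m+1,n+1)$ above $\delta$ tends to $0$.

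For the first part ($m=0$), I would just compute directly: $\Beta(1,n+1) = 1/(n+1)$, so $f(\theta\mid n,0) = (n+1)(1-\theta)^n$ and hence
\[
c(n,0,\delta) \;=\; \int_0^\delta (n+1)(1-\theta)^n\,\mathrm{d}\theta \;=\; 1 - (1-\delta)^{n+1}.
\]
Since $0 < 1-\delta < 1$, the right-hand side tends to $1$ as $n \to \infty$.

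For the second part ($m>0$, $n/m \to \infty$), I would use Chebyshev's inequality. The posterior $\text{Beta}(m+1,n+1)$ has mean and variance
\[
\mu \;=\; \frac{m+1}{m+n+2}, \qquad \sigma^2 \;=\; \frac{(m+1)(n+1)}{(m+n+2)^2(m+n+3)}.
\]
Since $n/m \to \infty$ forces $n \to \infty$ and $m = o(n)$, a short estimate gives $\mu \sim m/n \to 0$ and $\sigma^2 = O(mn/(m+n)^3) = O(m/n^2) \to 0$. In particular, for all sufficiently large $n/m$ we have $\mu < \delta/2$, and then Chebyshev yields
\[
\Pr[\theta > \delta] \;\le\; \Pr[\,|\theta - \mu| \ge \delta - \mu\,] \;\le\; \frac{\sigma^2}{(\delta-\mu)^2} \;\le\; \frac{4\sigma^2}{\delta^2} \;\longrightarrow\; 0,
\]
which is equivalent to $c(n,m,\delta) \to 1$.

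The only mildly delicate step is the concentration argument in the second part, since the statement allows $m$ itself to grow (as long as $n/m \to \infty$); one has to be a bit careful that both $\mu$ and $\sigma^2$ vanish uniformly in this regime rather than pointwise in $m$. Once one notes that $m+n+2 \ge n$ and $(m+n+2)^2(m+n+3) \ge n^3$, both estimates drop out cleanly, so I do not anticipate any real obstacle beyond writing out those bounds.
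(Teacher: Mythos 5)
Your proof is correct, and for the substantive second claim it follows essentially the same route as the paper: the paper likewise reads $c(n,m,\delta)$ as the distribution function of a $\mathrm{Beta}(m+1,n+1)$ random variable, quotes the same mean $\mu=\frac{m+1}{m+n+2}$ and variance $\sigma^2=\frac{(m+1)(n+1)}{(m+n+3)(m+n+2)^2}$, and applies Chebyshev's inequality with $\epsilon=\delta-\mu$, squeezing $c(n,m,\delta)$ between $1-\sigma^2/(\delta-\mu)^2$ and $1$. The differences are at the margins and they favor your write-up. First, for $m=0$ the paper runs the same Chebyshev argument, whereas you integrate exactly and obtain $c(n,0,\delta)=1-(1-\delta)^{n+1}$; this is more elementary and gives an explicit geometric rate rather than a concentration bound. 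Second, the paper simply asserts that ``in either limit condition'' one obviously has $\mu\to 0$ and $\sigma^2\to 0$, and it tacitly needs $\delta-\mu>0$ when it sets $\epsilon=\delta-\mu$; the uniformity issue you flag (both $m$ and $n$ may grow, only their ratio is controlled) is exactly what that ``obviously'' glosses over. Your estimates $\mu\le (m+1)/n\le 2m/n$ and $\sigma^2\le (m+1)(n+1)/n^3\le 4m/n^2$ (valid since $m\ge 1$ and $m+n+2\ge n$), together with restricting to $n/m$ large enough that $\mu<\delta/2$ and hence $(\delta-\mu)^2\ge \delta^2/4$, close that gap cleanly. In short: same key tool for the main part, with the paper's implicit steps made rigorous, and a sharper elementary treatment of the $m=0$ case.
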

Thus, if we have dominantly sufficient positive samples, we would always be able to reach a target confidence level. In practice, however, we are always limited in the budget or time, we thus would like to reach a certain confidence level at a low cost. For instance, instead of random sampling, we can apply idea like importance sampling~\cite{Veach1995} so as to increase the probability of sampling a negative sample and hope to gain the same confidence level with fewer samples. Recall that we can view symbolic execution as a particular way of importance sampling. Compared with random sampling, it essentially alters the probabilistic distribution of the traces so that more probability is associated with those traces following a given path. In the following, we investigate the idea of importance sampling in our setting and establish a condition which must be satisfied so that importance sampling (including symbolic execution) must satisfy in order to be more effective in achieving the same confidence level.

\subsection{Importance sampling}\label{sec:importancesampling}
Importance sampling is a widely-used technique in Monte Carlo method in order to approximate the expectation of a probability distribution. The intuition is after observing many positive samples, we should have more confidence in the system's correctness, \emph{if the samples are generated by a method that is more likely to sample a negative one}. We remark that the notion of importance sampling we adopt here has nothing to do with the expectation approximation~\cite{Veach1995}, but rather shares the same idea with the importance sampling in the Monte Carlo method.

Recall that $\theta$ is the error probability. The probability of a specific sampling method finding a negative trace is a function of $\theta$, denoted as $\varphi(\theta)$. We refer to $\varphi(\theta)$ as the \emph{effectiveness function} of the sampling method. Furthermore, $\varphi(\theta)$ is assumed to be continuous and strictly increasing on $[0,1]$ with $\varphi(0)=0$ and $\varphi(1)=1$. Given a specific sampling method (e.g., random sampling or symbolic execution), we may be able to approximate its effectiveness through empirical study. In certain special cases, we might identify a closed form of the effectiveness function for a specific sampling method. For instance, in the case of random sampling, the effectiveness function $\varphi(\theta) = \theta$. A sampling method with effectiveness $\varphi(\theta)$ is said to be \emph{more effective} than another with effectiveness $\phi(\theta)$, if $\varphi(\theta) > \phi(\theta)$ for all $\theta$. Two sampling methods are called incomparable if no one is more effective than the other.

In the following, we show that a more effective sampling method leads to a higher confidence level. 
Without loss of generality, we focus on effectiveness functions which can be expressed in the form of a power function $\varphi(\theta) = \theta^\alpha$ where $\theta \in [0,1]$ for $0 < \alpha \le 1$. The reason for the assumption is that effectiveness functions in this form can be compared easily. 

Following the discussion in Section~\ref{sec:beyinf}, suppose that the effectiveness of a testing method is $\varphi(\theta) = \theta^\alpha$ and we have witnessed $m$ negative samples and $n = N - m$ positive samples. The post distribution can be calculated as follows.
\begin{align}\label{eq:postdist2}
f(\theta \mid n,m) = \frac{ \theta^{\alpha m} (1-\theta^\alpha)^n}{\displaystyle\int_{0}^{1}  \theta^{\alpha m} (1-\theta^\alpha)^n\mathrm{d}\theta}
\end{align}
Accordingly, the confidence is defined as follows.
\begin{align}\label{eq:conf2}
c_\varphi(n,m,\delta) &= \int_0^\delta f(\theta \mid n,m) \mathrm{d}\theta
=\frac{\displaystyle \int_0^\delta \theta^{\alpha m} (1-\theta^\alpha)^n \mathrm{d}\theta}{\displaystyle\int_{0}^{1}  \theta^{\alpha m} (1-\theta^\alpha)^n\mathrm{d}\theta}.
\end{align}
The following theorem then establishes that a \emph{more effective} sampling method would always result in more confidence.
\begin{theorem}\label{thm:moreconf}
	Let $\varphi(\theta) = \theta^\alpha$ and $\psi(\theta) = \theta^\beta$ be the effectiveness function of two testing methods. If $1 \ge \alpha > \beta > 0$, then $\varphi(\theta) \le \psi(\theta)$ for all $\theta \in [0,1]$, and $c_\varphi(n,m,\delta) \le c_{\psi}(n,m,\delta)$ for all $n, m \in \naturals$ and $\delta \in (0,1)$. \hfill \qed
\end{theorem}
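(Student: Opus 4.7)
\textbf{Part (1)} is immediate: for $\theta\in[0,1]$ and $\alpha>\beta>0$, we have $\theta^\alpha \le \theta^\beta$, so $\varphi(\theta)\le \psi(\theta)$.

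For part (2), my plan is to reduce the confidence inequality to a pointwise monotonicity statement about the ratio of the integrands in~\eqref{eq:conf2}, then verify that monotonicity by an elementary substitution. Write $g_\varphi(\theta) := \theta^{\alpha m}(1-\theta^\alpha)^n$ and $g_\psi(\theta) := \theta^{\beta m}(1-\theta^\beta)^n$. Cross-multiplying and splitting $\int_0^1 = \int_0^\delta + \int_\delta^1$, the inequality $c_\varphi(n,m,\delta)\le c_\psi(n,m,\delta)$ becomes
\[
\int_0^\delta\!\!\int_\delta^1 \bigl[g_\varphi(s)\,g_\psi(t)\;-\;g_\psi(s)\,g_\varphi(t)\bigr]\,dt\,ds \;\le\; 0.
\]
Since $s\le t$ on the region of integration, a sufficient condition is that $g_\varphi/g_\psi$ be non-decreasing on $(0,1)$.

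Next, I would factor
\[
\frac{g_\varphi(\theta)}{g_\psi(\theta)} \;=\; \theta^{(\alpha-\beta)m}\cdot R(\theta)^{n}, \qquad R(\theta) := \frac{1-\theta^\alpha}{1-\theta^\beta}.
\]
The first factor is non-decreasing (constant when $m=0$) because $\alpha-\beta>0$, so the task reduces to showing that $R$ is non-decreasing. A direct logarithmic differentiation yields
\[
(\log R)'(\theta) \;=\; G(\beta,\theta) - G(\alpha,\theta), \qquad G(\gamma,\theta) := \frac{\gamma\,\theta^{\gamma-1}}{1-\theta^\gamma},
\]
so everything boils down to the auxiliary claim that for each fixed $\theta\in(0,1)$, the map $\gamma\mapsto G(\gamma,\theta)$ is strictly decreasing on $(0,\infty)$.

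This auxiliary claim is the main obstacle; every other step is routine algebraic rearrangement. My preferred route is the substitution $u := -\gamma\log\theta>0$, which rewrites $G(\gamma,\theta) = \frac{1}{\theta(-\log\theta)}\cdot\frac{u}{e^u-1}$; since $u\mapsto u/(e^u-1)$ is the classical strictly decreasing Bernoulli factor on $(0,\infty)$ and $u$ is strictly increasing in $\gamma$, the claim follows, giving $(\log R)'(\theta)\ge 0$ as required. An equivalent fallback is to compute $\partial_\gamma \log G = 1/\gamma + (\log\theta)/(1-\theta^\gamma)$ directly and reduce non-positivity of this expression to the elementary bound $\log y < y-1$ applied at $y=\theta^\gamma\in(0,1)$.
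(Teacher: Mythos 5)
Your proof is correct: the cross-multiplication into a double integral over $(0,\delta)\times(\delta,1)$, the reduction to monotonicity of the likelihood ratio $g_\varphi/g_\psi$, the factorization into $\theta^{(\alpha-\beta)m}\cdot R(\theta)^n$, and the substitution $u=-\gamma\log\theta$ reducing everything to the decrease of $u/(e^u-1)$ all check out (your fallback via $\log y\le y-1$ is also valid).

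Your route is genuinely different in organization from the paper's, even though both rest on the same dominance principle. The paper never considers the full ratio $g_\varphi/g_\psi$: it interpolates between the two integrands through $m+n$ intermediate densities, swapping one factor $\theta^\alpha\mapsto\theta^\beta$ at a time (Lemma~\ref{lem:step_t}) and one factor $(1-\theta^\alpha)\mapsto(1-\theta^\beta)$ at a time (Lemma~\ref{lem:step_1-t}), each swap justified by Lemma~\ref{lem:intcandy} --- the statement that reweighting a density by a monotone factor shifts normalized mass across $\delta$ in the expected direction, which is exactly the principle your double-integral step proves, except the paper proves it even more simply by bounding the weight by its value at $\delta$ on either side of $\delta$. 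The paper needs this iteration because its comparison lemma handles only one monotone weight at a time, whereas your one-shot argument works because monotonicity of a product of non-negative non-decreasing functions is immediate, so both kinds of factors are absorbed simultaneously. Notably, the core calculus fact is the same in disguise: the paper's claim that $h_t(x)=(\theta^{-x}-1)/x$ is increasing in $x$ becomes, under your substitution $v=-x\log\theta$, precisely the claim that $v/(e^v-1)$ is decreasing. What your version buys is compactness and a standard statistical reading (a monotone likelihood ratio yields stochastic dominance of the posterior, hence the confidence comparison), plus strictness of the inequality essentially for free when $n\ge 1$; what the paper's version buys is that each individual step is as elementary as possible --- Lemma~\ref{lem:step_t} needs no differentiation at all, since $\theta^{\beta-\alpha}-1$ is obviously positive and decreasing --- at the cost of invoking its lemmas $m$ and $n$ times and proving two separate swap lemmas.
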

The (rather involved) proof is presented in Appendix A. 
This theorem endorses our intuition that we should have more confidence in the systems' correctness when observing many positive samples, if the samples are generated by a method like symbolic execution (with a bias on negative samples). In general we cannot compare the confidence of two incomparable sampling methods. We remark that this result has not been formally proved before and it serves the foundation for the approach we propose next.

Based on the theorem, in order to apply symbolic execution to achieve better confidence than applying random sampling, we should apply it such that it is more likely to sample a negative trace. There are multiple different strategies on how/when to apply symbolic execution. For instance, we could symbolically execute a path which ends with a negative mode, or a part of the path (e.g., we solve for input values which are required to trigger the first few transitions of a path leading to a negative mode, if we have reasons to believe that only those transitions are unlikely to be fired through random sampling), or even simply symbolically execute a path which has not been visited before if all existing samples are positive. In the next section, we discuss how to compare these different strategies based on cost and propose a cost-effect algorithm.

\section{Sampling Strategies}\label{sec:strategy}
Recall that our objective is to check whether there is a trace which visits a negative mode. Theorem~\ref{thm:moreconf} certainly does not imply we should abandon random sampling. The simple reason is that it ignores the issue of time cost. In general, the cost of obtaining a negative trace through sampling (either random sampling or symbolic execution) is: $c/pr$ where $c$ is the cost of obtaining one sample and $pr$ is the probability of the sample being a negative trace. In the case of random sampling, $c$ is often low and $pr$ is also likely low, especially so if the negative traces all contain certain rare event. In the case of symbolic execution, $c$ is likely high since we need to solve a path condition to obtain a sample, whereas $pr$ is likely high. 
Thus, in order to choose between random sampling and symbolic execution, we would like to know their time cost, i.e., $c$ and $pr$. While knowing the cost of obtaining one sample through random sampling is relatively straightforward, knowing the cost $c$ of symbolic execution is highly non-trivial. In this section, we assume there is a way of estimating that and propose an algorithm based on the assumption. In Section~\ref{experiments}, we estimate $c$ empirically and show that even a rough estimation serves a good basis for choosing the right strategy. We can calculate $pr$ based on ${\cal M}_{\cal H}$. However, constructing ${\cal M}_{\cal H}$ is infeasible and thus we propose to approximate ${\cal M}_{\cal H}$ at runtime. 

\subsection{Probability Estimation}
Initially, since we have no idea on the probability of obtaining a negative trace, we apply wishful thinking and start with random sampling, hoping that a negative trace will be sampled. If a negative trace is indeed sampled, we are done. Otherwise, the traces that have been sampled effectively identify a subgraph of ${\cal M}_{\cal H}$, denoted as ${\cal M}_{sub}$, which contains only modes and transitions visited by the sample traces. Without any clue on the transition probability between modes not in ${\cal M}_{sub}$, it is infeasible for us to estimate $pr$ (i.e., the probability of reaching a negative mode). It is clear however that in order to reach a negative mode, we must sample in a way such that ${\cal M}_{sub}$ is expanded with unvisited modes. Thus, in the following, we focus on finding a strategy which is cost-effective in discovering new modes instead. 

For each mode $u$ in ${\cal M}_{sub}$ where there is an unvisited child mode $v$, we have the choice of either trying to reach $v$ from $u$ through more random sampling or through symbolic execution (i.e., solving the path condition). In theory, the choice is to be resolved as follows: random sampling if $c_t/q(u) < c_s$ and symbolic execution if $c_t/q(u) \ge c_s$, where $c_t$ is the cost of generating a random sample; $c_s$ is the cost of applying symbolic execution to generate a sample visiting an unvisited child of $u$ in ${\cal M}_{sub}$; and $q(u)$ is the probability of finding a new mode from $u$, i.e., $q(u) = \sum_{v \in V \setminus V_0} p_{uv}$. Intuitively, for random sampling, the expected number of samples to find a new mode is $1/q(u)$ and thus the expected cost of using random sampling to discover a new mode is $c_t / q(u)$. Unfortunately, knowing $q(u)$ and ${\cal M}_{sub}$ exactly is expensive. The former is the subject of recent research on model counting and probabilistic symbolic execution~\cite{DBLP:conf/spin/FilieriFPV15,DBLP:conf/kbse/LuckowPDFV14,DBLP:conf/sigsoft/FilieriPVG14,DBLP:conf/tacas/ChistikovDM15}, and the latter has been studied in~\cite{DBLP:conf/cade/AzizWD12}. Thus, in this work, we develop techniques to estimate their values.

In our approach, we actively estimate the probability of $q(u)$ (for each $u$ in ${\cal M}_{sub}$) from historical observation through Bayesian inference, by recording how many times we sampled $u$. Assume $q = q(u)$ has a prior distribution $f(q)$.
Let $A$ denote the event that an unvisited child $v$ remains unvisited after one sampling, and $\bar{A}$ denote the event that $v$ becomes visited afterwards. Then,
\begin{equation*}
f(q \mid A) = \frac{f(q)f(A\mid q)}{\displaystyle \int_0^1 f(q)f(A\mid q) \mathrm{d}q}
=\frac{qf(q)}{\displaystyle \int_0^1 qf(q) \mathrm{d}q}
\propto qf(q),
\end{equation*}
and similarly: $f(q \mid \bar{A}) \propto (1-q)f(q)$.

Suppose that mode $u$ has been sampled for $N$ times and for $m$ out of $N$ times, we end up with a child which has been visited previously. As a result, $n = N - m$ is the number of times we ended up with an unvisited child. We can compute the post distribution $f(q) \propto (1-q)^mq^n$ and the expectation as follows.
\begin{align}
E(q)
&= \frac{\displaystyle \int_0^1 q (1-q)^m q^n \mathrm{d}q}{\displaystyle \int_0^1 (1-q)^m q^n \mathrm{d}q} \nonumber =\frac{n+1}{m+n+2}.
\label{eq:Eq}
\end{align}
Thus, we estimate $q(u)$ as $(n+1)/(m+n+2)$. Intuitively, the bigger $m$ is, the less likely that an unvisited mode is going to be sampled through random sampling.

Next, we discuss how to apply symbolic execution in this setting. There are multiple strategies on how to apply symbolic execution to construct a sample for visiting $v$. For instance, we could solve a path condition from an initial mode to $v$ so that it will surely result in a trace visiting $v$ (if the path condition is satisfiable). Alternatively, we could take a sample trace which visits $u$ and apply symbolic execution to see whether the trace can be altered to visit $v$ after visiting $u$ by letting a different amount of time elapsing at mode $u$. That is, assume that $(u, X)$ is a concrete state of $sem({\cal H})$ visited by a sample trace, where $X$ is a valuation of $V$. We take the state $(u, X)$ as the starting point and apply symbolic execution to solve a one-step path condition so that $v$ is visited from state $(u, X)$. This is meaningful for hybrid automata because, for every step, by letting a different amount of time elapsing, we may result in firing a different transition and therefore reaching a different mode. We remark that if solving the one-step path condition has no solution, it does not necessarily mean that $v$ is unreachable from $u$. Nonetheless, we argue that this strategy is justified as, according to Theorem~\ref{thm:moreconf}, such a sampling strategy would be more effective than random sampling. To distinguish these two strategies, we refer to the former as \emph{global concolic sampling} and the latter \emph{local concolic sampling}. 
The choice of strategy depends on $c_s$. We estimate $c_s$ for particular solvers and systems in Section~\ref{experiments} and choose the right strategy accordingly.

\subsection{Concolic Sampling}\label{sec:algorithms}
Based on the theoretical discussion presented above, we then present our sampling algorithm, which we call concolic sampling. The details are shown in Algorithm~\ref{alg:full}. The input is a hybrid automaton modeling a hybrid system, where some modes are identified as negative ones. The aim is to identify a trace which visits a negative mode or otherwise report that there is certain probabilistic guarantee on their absence. We remark that we skip the part on how the probabilistic guarantee is computed and refer the readers to~\cite{benjaminsmc} for details. We rather focus on our contribution on combining random sampling and symbolic execution for better counterexample identification in the following.

We maintain the set of sample traces as $\Pi$ in the algorithm. Based on $\Pi$, we can construct the above-mentioned subgraph ${\cal M}_{sub}$ of ${\cal M}_{\cal H}$ systematically. Next, for each node $u$ in ${\cal M}_{sub}$ which potentially has unvisited children, we maintain two numbers $m$ and $n$ as discussed above, in order to estimate the probability of $q(u)$. If according to our strategy, there is still some $u$ such that it might be cheaper to discover a new child mode through random sampling, we proceed by generating a random sample using the algorithm presented in~\cite{benjaminsmc}, which is shown as Algorithm~\ref{alg:random}.

\begin{algorithm}[t]
    Let $\Pi$ be the set of sampled runs, initialized to the empty set; \\
    Let ${\cal M}_{sub}$ be the subgraph of $\cal M$, initialized to the root node; \\
    \Repeat {time out} {
       	Set $u = \argmin_u \min\left(c_t/E(q(u)), c_s\right)$; \\
        \If {$c_t/q(u) < c_s$} {
			Invoke Random Sampling to generate a run $\pi$;
		}
        \Else {
            Apply symbolic execution to obtain a sample $\pi$ visiting a new child of $u$; \\
        }
		\If {$\pi$ visits a negative mode} {
			Present $\pi$ as a counterexample and terminate;
		}
		\If {$\pi$ visits an undiscovered mode} {
			$n_u \coloneqq n_u + 1$;
		}
		\Else {
			$m_u \coloneqq m_u + 1$;
		}
		Add $\pi$ into $\Pi$ and add $u$ to $\MM_{sub}$;
	}
	\caption{Concolic Sampling}
	\label{alg:full}
\end{algorithm}

We briefly introduce how Algorithm~\ref{alg:random} works in the following. In a nutshell, the algorithm is designed to sample a run $\pi$ according to an approximation of the probability distribution of ${\cal M}_{\cal H}$. The main idea is to use the Monte Carlo method to approximate the measure of time windows $\Vert T_q(v,g) \Vert$. Recall that $T_q(v,g) = \{t \in (0,1) \mid \theta_q(v,t) \text{ satisfies } g\}$. Therefore the measure $\Vert T_q(v,g) \Vert$ is the mean of $[\theta_q(v,\tau) \models g]$, where the random variable $\tau \sim \UU(0,1)$. According to the law of large numbers~\cite{Leon-Garcia:2007:PRP:1214239}, the sample mean almost surely converges to the expectation. Thus we have
\begin{equation*}
\frac{\sum_{i=1}^n [\theta_q(v,\tau_i)\models g]}{n} \xrightarrow{a.s.} \Vert T_q(v,g_j) \Vert
\quad \text{as} \quad n \to \infty,
\end{equation*}
where $\tau_1,\tau_2,\dots,\tau_n \distras{i.i.d.} \tau$. To generate a $K$-step run, Algorithm~\ref{alg:random} works by generating one random step at a time. In particular, after each time unit, at line 4, firstly a set of time points are uniformly generated. By testing how often each transition from the current mode is enabled at these time points, we estimate the transition probability in ${\cal M}_{\cal H}$. At line 7, we sample a transition according to the estimated probability and generate a step. This procedure finishes when a run which spans $K$ time units is generated.

If random sampling is unlikely to be cost-effective in discovering a new mode according to our strategy, symbolic execution is employed at line 7 in Algorithm~\ref{alg:full} to generate a sample to cover a new node in ${\cal M}_{\cal H}$. Among all the nodes in ${\cal M}_{sub}$, we identify the one which would require the minimum cost to discover a new child according to our estimation $c_s$, encode the corresponding path condition and apply an existing constraint solver that supports ODE (i.e., dReal~\cite{gao2013dreal}) to generate a corresponding run. Once we obtain a new run $\pi$ at line 8, we check whether it is a counterexample. If it is, we report and terminate at line 10. Otherwise, we repeat the same procedure until it times out. 

\begin{algorithm}[t]
	\KwIn{A hybrid automaton ${\cal H}$ and a state $\langle q,v \rangle$}
	\KwResult{A successive state $\langle p,u \rangle$}
			Sample time points $t_1,\cdots,t_J$ uniformly from $[0,1]$; \\
			\ForEach {outgoing transition $q \xrightarrow{g_i} q_i$}  {
				Set $T_i:= \{ t_j \mid \Phi_{q}(t_j,v) \models g_i \}$,
			}
			Choose a transition $q \xrightarrow{g_i} q_i$ with probability $\Vert T_i \Vert /\sum_i \Vert T_i \Vert$; \\
			Sample a time point $t$ uniformly from $T_i$; \\
			Set $u \coloneqq \Phi_{q_i}(1-t, \Phi_{q}(t,v))$ and $p \coloneqq q_i$; \\
		Return $\langle p,u \rangle$;
	\caption{Random Sampling}
	\label{alg:random}
\end{algorithm}


\section{Evaluation} \label{experiments}
We implemented our approach in a self-contained toolkit called \HC, available online at~\cite{url}.
\HC{} is implemented with 1575 lines of Python codes (excluding external libraries we used) and is built with a web interface. \HC{} relies on the dReal constraint solver~\cite{gao2013dreal} for symbolic execution. In the following, we evaluate \HC{} in order to answer the following research question: does our strategy on combining random sampling and symbolic execution (resulting from our theoretical analysis) allow us to identify rare counterexamples more efficiently?

Our test subjects include three benchmark hybrid systems which we gather from previous publications as well as a simplified real-world water treatment system.
\begin{itemize}
    \item \emph{Thermodynamic system} We first test our method on a room heating system from~\cite{fehnker2004benchmarks}. The system has $n$ rooms and $m \le n$ heaters which are used to tune the rooms' temperature. 
The temperature of a room is affected by the environment temperature and also by whether a heater is warming the room. The system therefore aims to maintain the rooms' temperature within certain comfortable range by moving around and turning on and off the heaters. 
We consider in the experiment such a system with three rooms and two heaters. 
We verify the same property as in~\cite{benjaminsmc}, i.e., whether the two heaters will be moved to other rooms in the first five days.

\item \emph{Navigation system} Our second test subject is the navigation system from~\cite{fehnker2004benchmarks}. This system contains a grid of cells, where each cell is associated with some particular velocity.
Whenever a floating object moves from one cell to the other, it changes its acceleration rate according to the velocity in that cell. If the object leaves the grid, the velocity is the one of the closest cell.
We check whether an object in the grid will leave its initial cell and will not enter a dangerous cell, within six minutes.

\item \emph{Traffic system} Our third model is from the long standing research on modeling traffic and examining causes of traffic jams and car crashes. We use the ODE in~\cite{PhysRevE.80.046205} to describe the dynamics of a vehicle. We consider in the experiment a circular road with $n = 5$ cars on it. We are interested whether there could be a potential traffic accident in the closed system, and whether there could be a potential traffic jam. 

\item \emph{SWaT system}
Lastly, we tested our method on a simplified real-world system model.
The Secure Water Testbed (SWaT) is a raw water purification laboratory located at SUTD~\cite{swat}. SWaT is a complicated system involving a series of water treatments like ultrafiltration, chemical dosing, dechlorination through an ultraviolet system.
We build a hybrid automaton model of SWaT based on the control program in the programmable logic control (PLC) in the system. The modes are defined based on the discrete states of the actuators (e.g., motorized valves and motorized pumps). These actuators are controlled by the PLC. There are in total 23 actuators, which results in many modes.
By focusing on the hydraulic process in the system only, we build a hybrid automaton with 2721 modes. We skip the details of the model due to the limited space here. The readers are referred to~\cite{hychecker} for details.
The property we verify is that the water level in the backwash tank must not be too high or too low (otherwise, the system needs to be shut down), with some extreme initial setting (e.g., the water level in the tank is close to be too low) to analyze the system safety.
\end{itemize}

\paragraph{Estimating Cost of Symbolic Execution} In order to apply Algorithm~\ref{alg:full} with the right strategy, we need to estimate $c_s$. The underlying question is how efficient a constraint solver can check the satisfiability of a given path condition. We remark that it is a challenging research problem and perhaps deserves a separate research project by itself. There are a dozen of various factors that determines how a solver performs in solving a given constraint, including the number of variables, the number of operators, the number of differential equations, the length of witnesses (if there is any), etc. Even on the same problem, different solvers have different performance due to different search strategies, ways of pruning and reducing state space, etc~\cite{Lu:2005:ESS:1048925.1049282,Armand2011,Jha2009}. All these facts make a precise estimation of efficiency of symbolic solvers extremely difficult.

In this work, following previous work on this topic~\cite{DBLP:conf/cade/AzizWD12}, we estimate $c_s$ as follows. First, we construct a sequence of increasingly more complicated random constraints (composed of constraints on ODE as well as ordinary constraints, which we obtain from examples in dReal). We then measure the time needed to solve them using dReal one-by-one. Based on the results, we observe that the dominant factor is the length of the formula and thus heuristically decide $c_s$ to be a function of the length of constraints. Next, we apply a function fitting method to obtain a function for predicting $c_s$. The function we obtained is $\exp(1.73 l - 1.65)-1$ where $l$ is the length of the formula, which suggests that the solving time is exponential in the length of the formula. It implies that dReal has problem solving path conditions containing two or more steps, which in turn means that our choice of strategy should be the \emph{local concolic sampling}. We remark that this is unlikely a precise estimation. Nonetheless, as we show below, even a rough estimation would be useful in guiding when and how to do symbolic execution.

\paragraph{Experiment Results}
Table~\ref{roomheater} shows the experiment results. All experiment results are obtained in Ubuntu Linux 14.04 on a machine with an Intel(R) Core(TM) i5-4950, running with one 3.30GHz CPU core (no parallel optimization), 6M cache and 12 GB RAM. We set a timeout of 10 minutes for each experiment, i.e., if no counterexample is identified after 10 minutes, the property has passed the test. Each experiment is repeated for 10 times and we report the average time. All details on the experiments are at~\cite{hychecker}. 

We compare four approaches in order to show the effectiveness of our chosen strategy. The first is the random sampling approach proposed in~\cite{benjaminsmc}. The results are shown in column {\bf random}. Note that there are two results for the thermodynamic system. This is because due to the randomness in the approach, the results are not always consistent (e.g., in one experiment, a counterexample is found, whereas none is found in another). The second approach is the concolic testing approach in~\cite{sen2007concolic} (i.e., applying random testing once and applying symbolic execution to visit the alternative path in the last branch and so on). The results are shown in column {\bf dynamic}. The last two columns report the result of applying {\bf global} concolic sampling and {\bf local} concolic sampling respectively. 

We have the following observations based on the results. First, among the four approaches, local concolic testing is able to spot counterexamples more efficiently in all cases. Compared with random sampling, the number of samples explored by local concolic sampling is significantly smaller. This confirms the result of the theoretical analysis in previous sections. Second, symbolic execution for hybrid systems are clearly constrained by the limited capability of existing hybrid constraint solvers like dReal. For all four cases, both concolic testing and global concolic sampling time out whilst waiting for dReal to solve the first path condition. This is because the path condition (composed of constraints from multiple steps) is complex and dReal takes a lot of time trying to solve it. The only difference is that while concolic testing got stuck after the first sample, global concolic sampling got stuck after it has randomly sampled a few traces and switched to symbolic execution. On the contrary, local concolic sampling uses dReal to solve a one-step path condition each time and is able to smartly switch between random sampling and symbolic execution, and eventually found a counterexample. Third, the experiment results suggest that the formula that we applied for estimating $c_s$ turned out to be an under-approximation, i.e., the actual time cost is often much larger. If we modify the function to return a much larger cost for solving a path composed of two or more steps, global concolic sampling would be equivalent to random sampling as symbolic execution would never be selected due to its high cost.

\begin{table}[t]
\centering
\caption{Experiments results}
\label{roomheater}
\begin{tabular}{|c|c|c|c|c|c|c|}
\hline
& & \multicolumn{2}{c|}{\textbf{random}} & \textbf{dynamic} & \textbf{global} & \textbf{local} (our strategy) \\
\hline
\multirow{3}{*}{\emph{thermodynamic system~}} & result &ct-eg found&pass & pass & pass & ct-eg found \\
& time(s) &$340.4$ & $600$& $600$ & $ 600$ &$41.25$ \\
& \#samples &$13$K &$22$K & n/a & $134$ &$551$ \\
\hline
\multirow{3}{*}{\emph{navigation system~}} & result &\multicolumn{2}{c|}{ct-eg found} & pass & pass &ct-eg found \\
& time(s) &\multicolumn{2}{c|}{$91.7$ } & $ 600$& $ 600$ & $4.33$  \\
& \#samples &\multicolumn{2}{c|}{$354$ } & n/a & $4$ & $5$ \\
\hline
\multirow{3}{*}{\emph{traffic system~}} & result &\multicolumn{2}{c|}{pass } &pass &pass &ct-eg found\\
	&		time(s) &\multicolumn{2}{c|}{$600$ } &$600$ &$600$ &$28.86$ \\
	&		\#samples &\multicolumn{2}{c|}{$1240$ }  &n/a &$2$ &$2$ \\
    			\hline
\multirow{3}{*}{\emph{SWaT system~}} & result &\multicolumn{2}{c|}{ct-eg found}   & pass&pass &ct-eg found \\
& time(s) &\multicolumn{2}{c|}{$102.4$} &$600$ &$600$ &$64.6$ \\
& \#samples &\multicolumn{2}{c|}{$169$ } &n/a &$24$ &$68$ \\
\hline
\end{tabular}
\end{table}

%

\section{Conclusion and Related Works}\label{sec:relatedworks}
In this work, we investigated the effectiveness of different sampling methods (i.e., random sampling and symbolic execution) for hybrid systems. We established theoretical results on comparing their effectiveness and we developed an approach for combining random sampling and symbolic executions in a way which is provably cost-effective.

In the following, we discuss the related work, in addition to those discussed already.
This work is inspired by~\cite{DBLP:conf/sigsoft/BohmeP14}, which initialized the discussion on the efficiency of random testing.
Our work aims to combine random sampling and symbolic execution to identify rare counterexamples efficiently. It is thus closely related to work on handling rare events in the setting of statistical model checking~\cite{barbot2012coupling,barbot2012importance,DBLP:conf/cav/JegourelLS13}. In~\cite{barbot2012coupling}, the authors set up a theoretical framework using coupling theory and developed an efficient sampling method that guarantees a variance reduction and provides a confidence interval. In~\cite{barbot2012importance} the authors proposed the first importance sampling method for CTMC to provide a true confidence interval.  In~\cite{DBLP:conf/cav/JegourelLS13} the authors motivated the use of importance splitting to estimate the probability of a rare property. Our work is different as we complement sampling with symbolic execution to identify rare events efficiently.

%
%
%

This work borrows idea from work on combining program testing with symbolic execution (a.k.a.~dynamic symbolic execution or concolic testing). In~\cite{Godefroid:2005:DDA:1064978.1065036}, the authors proposed a way of combining program testing with symbolic execution to achieve better test coverage. Random testing is first applied to explore program behaviors, after which symbolic execution is used to direct the test towards different program branches. Similar ideas later have been developed in~\cite{sen2006cute,majumdar2007hybrid,sen2007concolic,DBLP:conf/osdi/CadarDE08}. Our work is different in two ways. One is that we target hybrid systems in work, which has different characteristics from ordinary programs. One of them is that symbolic execution of hybrid automata is considerable more expensive, which motivated us to find ways of justifying the use of symbolic execution. The other is that, based on the probabilistic abstraction of hybrid models, we are able to formally compare the cost of random sampling against symbolic execution to develop cost-effective sampling strategies. We remark that the same idea can be applied to concolic testing of programs as well.

\HC{} is a tool for analyzing hybrid systems and thus it is related to tools/systems on analyzing hybrid systems. In~\cite{platzer2010logical}, the authors developed a theorem prover for hybrid systems. Users are required to use differential dynamic logic to model hybrid systems. Afterwards, the prover can be used interactively to find a sound and complete proof of certain properties of the system. It has been shown that the prover works for safety critical systems like aircrafts~\cite{platzer2010logical}. \HC{} is different as it is fully automatic. \emph{dReach}~\cite{gao2014delta} is a recent tool developed for verifying hybrid systems. It is based on the SMT solver \emph{dReal}~\cite{gao2013dreal} developed by the same authors. \emph{dReach} focuses on bounded $\delta$-complete reachability analysis. It provides a relatively easy-to-use interface for modeling hybrid systems and verifies whether a system is $\delta$-safe under given safety demands. We observe since \emph{dReach} attempts to solve every path in a hybrid automaton, its performance suffers when the system becomes more complicated. \HC{} relies on \emph{dReal} and tries to improve \emph{dReach} by combining random sampling to avoid solving many of the paths. HyTech~\cite{henzinger1997hytech} is one of the earliest tools on verifying hybrid systems. It is limited to linear hybrid automata.

\section*{Acknowledgement}
The project is supported by the NRF project IGDSi1305012 in SUTD.\\
The work is supported by the National Natural Science Foundation of China under grant no. 61532019, 61202069 and 61272160. 

\bibliographystyle{abbrv}
\bibliography{paper}

\newpage
\appendix
\section{Proofs}
In this section we prove Proposition~\ref{prop:conf2one} and Theorem~\ref{thm:moreconf}.

\begin{proof}[of Proposition~\ref{prop:conf2one}]
	It is worthwhile to point out that $c(n,m,\delta)$, as a function of $\delta$, is a distribution function~\cite{Crr46} with expectation $\mu$ and variance $\sigma^2$ given by
	\begin{equation*}
	\mu = \frac{m+1}{m+n+2}, \quad \sigma^2 = \frac{(m+1)(n+1)}{(m+n+3)(m+n+2)^2}.
	\end{equation*}
	Obviously, in either limit condition in the proposition we have $\mu \to 0, \sigma^2 \to 0$.
	
	Let $X$ is a random variable according to the distribution function $c(n,m,\cdot\ )$. According to Chebyshev's inequality
	\begin{equation*}
	P(|X-\mu| \le \epsilon) \ge 1- \frac{\sigma^2}{\epsilon^2},
	\end{equation*}
	and let $\epsilon = \delta - \mu$, we have
	
	\begin{equation*}
	P(X \le \delta) \ge P(|X-\mu| \le \delta-\mu) \ge
	1- \frac{\sigma^2}{(\delta - \mu)^2} \to 1.
	\end{equation*}
	
	On the other hand, we have $c(n,m,\delta) = P(X \le \delta) \le 1$. By Squeeze theorem, we conclude $c(n,m,\delta) \to 1$.
\end{proof}

Before we prove Theorem~\ref{thm:moreconf}, We first prove some lemmas. In the following, we assume all functions are Riemann-integrable functions.

\begin{lemma}\label{lem:candy}
	For any $A, B, a, b > 0$,
	\begin{equation*}
	  \frac{A}{B} \le \frac{A+a}{B+b} \iff \frac{A}{B} \le \frac{a}{b}.
	\end{equation*}
\end{lemma}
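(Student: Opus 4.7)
The plan is to reduce both inequalities in the claimed equivalence to the single cross-multiplied inequality $Ab \le Ba$, using the positivity hypothesis on $A, B, a, b$ to clear denominators without changing the direction of the inequalities.

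First, I would handle the forward direction. Starting from $\tfrac{A}{B} \le \tfrac{A+a}{B+b}$, I multiply both sides by the positive quantity $B(B+b)$ to obtain the equivalent inequality $A(B+b) \le B(A+a)$. Expanding both sides cancels the common term $AB$, leaving exactly $Ab \le Ba$. Dividing through by the positive number $Bb$ then yields $\tfrac{A}{B} \le \tfrac{a}{b}$.

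For the reverse direction I would simply run the same chain of equivalences backwards: starting from $\tfrac{A}{B} \le \tfrac{a}{b}$, multiply by $Bb > 0$ to get $Ab \le Ba$, add $AB$ to both sides, factor, and divide by $B(B+b) > 0$ to recover $\tfrac{A}{B} \le \tfrac{A+a}{B+b}$. Since every step is an equivalence under the positivity assumption, no separate argument is needed for the two directions — the proof is essentially a single display of equivalent inequalities.

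I do not expect any genuine obstacle here; this is the standard mediant inequality and all the manipulations are elementary. The only point of care is to note explicitly that $B$, $b$, $B+b$ are all strictly positive so that clearing denominators preserves the direction of $\le$. The lemma is presumably invoked later in the proof of Theorem~\ref{thm:moreconf} as a comparison tool for ratios of integrals, where $A, B$ and $a, b$ will be chosen as the numerator/denominator contributions from complementary subintervals.
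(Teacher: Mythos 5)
Your proof is correct and is essentially identical to the paper's: the paper establishes the lemma by the same chain of equivalences, $\frac{A}{B} \le \frac{A+a}{B+b} \iff A(B+b) \le B(A+a) \iff Ab \le aB \iff \frac{A}{B} \le \frac{a}{b}$, obtained by cross-multiplying with the positive quantities involved. Your only addition is making the positivity justification explicit, which the paper leaves implicit.
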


\begin{proof}
	\begin{align*}
	\frac{A}{B} \le \frac{A+a}{B+b}
	&\iff A(B+b) \le B(A+a) \\
	&\iff AB + Ab \le AB + aB \\
	&\iff Ab \le aB \\
	&\iff 	\frac{A}{B} \le \frac{a}{b}.
	\end{align*}
\end{proof}

\begin{lemma}\label{lem:intcandy}
	For any functions $f, g\colon [0,1] \to \reals_{\ge 0}$ and $\delta \in~(0,1)$,	\begin{equation*}
	\frac{\displaystyle \int_{0}^{\delta} f(\theta)g(\theta)\mathrm{d}\theta}{\displaystyle \int_{0}^{1} f(\theta)g(\theta)\mathrm{d}\theta}
	\ge
	\frac{\displaystyle \int_{0}^{\delta} f(\theta)\mathrm{d}\theta}{\displaystyle \int_{0}^{1} f(\theta)\mathrm{d}\theta},
	\end{equation*}
	if $g$ is a decreasing function.
\end{lemma}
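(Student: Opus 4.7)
The plan is to reduce the inequality to a cleaner bilinear form and then exploit the monotonicity of $g$ via a symmetrization trick. Specifically, I would first handle the degenerate cases where $\int_0^1 f = 0$ or $\int_0^1 fg = 0$ (both trivial or vacuous) and then assume both are positive, so that cross-multiplying is legitimate. The inequality to prove becomes
\[
\Bigl(\int_0^\delta fg\Bigr)\Bigl(\int_0^1 f\Bigr) \;\ge\; \Bigl(\int_0^\delta f\Bigr)\Bigl(\int_0^1 fg\Bigr).
\]

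Next I would split each integral over $[0,1]$ as the sum of integrals over $[0,\delta]$ and $[\delta,1]$. The two $\int_0^\delta fg \cdot \int_0^\delta f$ terms cancel, and the inequality collapses to
\[
\Bigl(\int_0^\delta f(x)g(x)\,\mathrm{d}x\Bigr)\Bigl(\int_\delta^1 f(y)\,\mathrm{d}y\Bigr) \;\ge\; \Bigl(\int_0^\delta f(x)\,\mathrm{d}x\Bigr)\Bigl(\int_\delta^1 f(y)g(y)\,\mathrm{d}y\Bigr).
\]
This is exactly the ``mediant'' form foreshadowed by Lemma~\ref{lem:candy}, applied to $A=\int_0^\delta fg$, $B=\int_0^\delta f$, $a=\int_\delta^1 fg$, $b=\int_\delta^1 f$: the claim $A/B \ge a/b$ is what is needed to get $A/B \ge (A+a)/(B+b)$.

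The key step, and the one that uses monotonicity of $g$, is to rewrite the difference of the two products as a single double integral:
\[
\int_0^\delta\!\!\int_\delta^1 f(x)f(y)\bigl[g(x)-g(y)\bigr]\,\mathrm{d}y\,\mathrm{d}x.
\]
Since $x\le\delta\le y$ and $g$ is decreasing, $g(x)\ge g(y)$, so the integrand is pointwise nonnegative (using $f\ge 0$), and the integral is therefore nonnegative. This yields the desired bilinear inequality and completes the proof.

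The main obstacle I expect is purely bookkeeping: making sure the split of $\int_0^1$ is done correctly so that the right terms cancel, and correctly invoking Lemma~\ref{lem:candy} in the direction needed (ratio-on-subinterval vs.\ ratio-on-complement). There is no serious analytic difficulty, since Riemann-integrability of $f$ and $g$ on $[0,1]$ (and thus of $fg$ and of $f(x)f(y)[g(x)-g(y)]$ on the rectangle) lets Fubini be applied without fuss; the only place where care is required is in the degenerate boundary cases noted at the start.
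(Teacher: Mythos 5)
Your proof is correct, and it reaches the same intermediate target as the paper but closes it with a different mechanism. Both arguments reduce the claim (via Lemma~\ref{lem:candy}, or equivalently your cross-multiply-and-cancel bookkeeping) to the cross-product inequality
\begin{equation*}
\Bigl(\int_0^\delta f(x)g(x)\,\mathrm{d}x\Bigr)\Bigl(\int_\delta^1 f(y)\,\mathrm{d}y\Bigr)
\;\ge\;
\Bigl(\int_0^\delta f(x)\,\mathrm{d}x\Bigr)\Bigl(\int_\delta^1 f(y)g(y)\,\mathrm{d}y\Bigr).
\end{equation*}
The paper proves this by sandwiching through the constant $g(\delta)$: since $g(\theta)\le g(\delta)$ on $[\delta,1]$ and $g(\theta)\ge g(\delta)$ on $[0,\delta]$, replacing $g$ by $g(\delta)$ increases the ratio $\int_\delta^1 fg \,/\int_0^\delta fg$, and the constant then cancels. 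You instead write the difference of the two products as the double integral $\int_0^\delta\!\int_\delta^1 f(x)f(y)\bigl[g(x)-g(y)\bigr]\,\mathrm{d}y\,\mathrm{d}x$, whose integrand is pointwise nonnegative because $x\le\delta\le y$ and $g$ is decreasing. This is the classical Chebyshev-type symmetrization trick. The paper's route is slightly shorter and needs no double integrals (and in fact no Fubini is needed in yours either, since the integrand splits into products of one-variable functions); your route is marginally more robust, since it never divides by intermediate quantities and avoids the degenerate case $g(\delta)=0$, where the paper's middle ratio is formally $0/0$, and you are the only one to flag the vacuous cases where the denominators vanish. Both proofs use only that $g$ on $[0,\delta]$ dominates $g$ on $[\delta,1]$, not full monotonicity.
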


\begin{proof}
	It is sufficient to show
    \begin{align*}
        \frac{\displaystyle \int_{\delta}^{1} f(\theta)g(\theta)\mathrm{d}\theta}{\displaystyle \int_{0}^{\delta} f(\theta)g(\theta)\mathrm{d}\theta}
        \le
        \frac{\displaystyle \int_{\delta}^{1} f(\theta)\mathrm{d}\theta}{\displaystyle \int_{0}^{\delta} f(\theta)\mathrm{d}\theta}.
    \end{align*}
    Since $g$ is a decreasing function,
    \begin{align*}
      \frac{\displaystyle \int_{\delta}^{1} f(\theta)g(\theta)\mathrm{d}\theta}{\displaystyle \int_{0}^{\delta} f(\theta)g(\theta)\mathrm{d}\theta}
      &\le
      \frac{\displaystyle \int_{\delta}^{1} f(\theta)g(\delta)\mathrm{d}\theta}{\displaystyle \int_{0}^{\delta} f(\theta)g(\delta)\mathrm{d}\theta}
      =
      \frac{\displaystyle \int_{\delta}^{1} f(\theta)\mathrm{d}\theta}{\displaystyle \int_{0}^{\delta} f(\theta)\mathrm{d}\theta}.
    \end{align*}
\end{proof}

\begin{lemma}\label{lem:step_t}
	Let $f \colon [0,1] \to \reals_{\ge 0}$ be a function and $\delta \in (0,1)$. Then
	\begin{equation*}
	\frac{\displaystyle \int_0^\delta \theta^\alpha  f(\theta) \mathrm{d}\theta}{\displaystyle \int_0^1 \theta^\alpha  f(\theta) \mathrm{d}\theta}
	\le
	\frac{\displaystyle \int_0^\delta \theta^\beta  f(\theta) \mathrm{d}\theta}{\displaystyle \int_0^1 \theta^\beta  f(\theta) \mathrm{d}\theta},
	\end{equation*}
	if $\ 0 < \beta < \alpha \le 1$.
\end{lemma}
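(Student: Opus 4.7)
The plan is to reduce Lemma~\ref{lem:step_t} to a monotone-increasing variant of Lemma~\ref{lem:intcandy} via the factorization $\theta^\alpha = \theta^\beta \cdot \theta^{\alpha-\beta}$. Setting $h(\theta) := \theta^\beta f(\theta)$ and $g(\theta) := \theta^{\alpha-\beta}$, the desired inequality rewrites as
$$\frac{\int_0^\delta h(\theta) g(\theta)\,d\theta}{\int_0^1 h(\theta) g(\theta)\,d\theta} \le \frac{\int_0^\delta h(\theta)\,d\theta}{\int_0^1 h(\theta)\,d\theta}.$$
Since $\alpha - \beta > 0$, the function $g$ is (strictly) increasing on $[0,1]$. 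So what is needed is the increasing counterpart of Lemma~\ref{lem:intcandy} rather than the decreasing version stated there.

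First I would state and prove this counterpart: for any non-negative Riemann-integrable $h\colon [0,1]\to \reals_{\ge 0}$ and any non-negative increasing $g$, the displayed inequality above holds. The proof mirrors the one given for Lemma~\ref{lem:intcandy}. After subtracting both sides from $1$ and rearranging, the claim is equivalent to
$$\frac{\int_\delta^1 h(\theta) g(\theta)\,d\theta}{\int_0^\delta h(\theta) g(\theta)\,d\theta} \ge \frac{\int_\delta^1 h(\theta)\,d\theta}{\int_0^\delta h(\theta)\,d\theta},$$
which says the $h$-weighted average of $g$ over $[\delta,1]$ is at least its $h$-weighted average over $[0,\delta]$. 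Using $g(\theta)\ge g(\delta)$ for $\theta\in[\delta,1]$ and $g(\theta)\le g(\delta)$ for $\theta\in[0,\delta]$, both averages are separated by the threshold value $g(\delta)$, and the inequality follows in one line, exactly as in Lemma~\ref{lem:intcandy}.

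Applying this counterpart with $h(\theta) = \theta^\beta f(\theta)$ and $g(\theta) = \theta^{\alpha-\beta}$ yields Lemma~\ref{lem:step_t}. No real obstacle remains: since $\beta > 0$ and $\alpha - \beta > 0$, both $\theta^\beta$ and $\theta^{\alpha-\beta}$ are continuous on $[0,1]$, so $h$ and $g$ are Riemann-integrable and non-negative, as required. The only mild caveat is the degenerate case in which a denominator vanishes (i.e., $f\equiv 0$ almost everywhere on $[0,\delta]$ or on $[\delta,1]$), which is handled by cross-multiplying the inequality before dividing, or by a one-line continuity argument.
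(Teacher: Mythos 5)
Your proposal is correct, but it takes a genuinely different route from the paper. The paper proves Lemma~\ref{lem:step_t} by an \emph{additive} decomposition: it writes $\theta^\beta = \theta^\alpha + (\theta^\beta-\theta^\alpha)$, invokes Lemma~\ref{lem:candy} (the mediant inequality) to reduce the claim to
\begin{equation*}
\frac{\displaystyle \int_0^\delta (\theta^\beta - \theta^\alpha) f(\theta)\,\mathrm{d}\theta}{\displaystyle \int_0^1 (\theta^\beta - \theta^\alpha) f(\theta)\,\mathrm{d}\theta}
\ge
\frac{\displaystyle \int_0^\delta \theta^\alpha f(\theta)\,\mathrm{d}\theta}{\displaystyle \int_0^1 \theta^\alpha f(\theta)\,\mathrm{d}\theta},
\end{equation*}
and then factors $\theta^\beta-\theta^\alpha = \theta^\alpha(\theta^{\beta-\alpha}-1)$ so that Lemma~\ref{lem:intcandy} applies with the \emph{decreasing} weight $g(\theta)=\theta^{\beta-\alpha}-1$. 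You instead use the \emph{multiplicative} factorization $\theta^\alpha = \theta^\beta\cdot\theta^{\alpha-\beta}$ and prove the increasing-weight mirror of Lemma~\ref{lem:intcandy} directly by the same threshold-at-$g(\delta)$ argument. Your route is shorter and arguably cleaner: it bypasses Lemma~\ref{lem:candy} and the algebraic regrouping entirely, and your multiplier $\theta^{\alpha-\beta}$ is bounded on $[0,1]$, whereas the paper's factor $\theta^{\beta-\alpha}-1$ is unbounded near $\theta=0$, making integrability of the weighted integrands slightly more delicate there. What the paper's route buys is economy of lemmas: it reuses the decreasing-weight Lemma~\ref{lem:intcandy} verbatim, which is needed anyway for Lemma~\ref{lem:step_1-t}, so only one monotone-weight lemma appears in the appendix; your counterpart either has to be stated and proved separately (as you do) or derived from Lemma~\ref{lem:intcandy} applied to $fg$ and $1/g$, with a word about $1/g$ being unbounded where $g$ vanishes. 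Your attention to the degenerate case of vanishing denominators is a point of rigor the paper's own proof glosses over; cross-multiplying, as you suggest, disposes of it.
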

\begin{proof}
	Notice that
	\begin{align*}
	\frac{\displaystyle \int_0^\delta \theta^\beta  f(\theta) \mathrm{d}\theta}{\displaystyle \int_0^1 \theta^\beta  f(\theta) \mathrm{d}\theta}
	&=
	\frac{\displaystyle \int_0^\delta \theta^\alpha f(\theta) \mathrm{d}\theta + \int_0^\delta (\theta^\beta - \theta^\alpha) f(\theta) \mathrm{d}\theta}{\displaystyle \int_0^1 \theta^\alpha f(\theta) \mathrm{d}\theta + \int_0^1 (\theta^\beta - \theta^\alpha) f(\theta) \mathrm{d}\theta}.
	\end{align*}
	By Lemma~\ref{lem:candy}, it is sufficient to show
	\begin{align*}
	\frac{\displaystyle \int_0^\delta (\theta^\beta - \theta^\alpha) f(\theta) \mathrm{d}\theta}{\displaystyle \int_0^1 (\theta^\beta - \theta^\alpha) f(\theta) \mathrm{d}\theta}
	\ge
	\frac{\displaystyle \int_0^\delta \theta^\alpha  f(\theta) \mathrm{d}\theta}{\displaystyle \int_0^1 \theta^\alpha  f(\theta) \mathrm{d}\theta},
	\end{align*}
where the left-hand-side
\begin{align*}
\frac{\displaystyle \int_0^\delta (\theta^\beta - \theta^\alpha) f(\theta) \mathrm{d}\theta}{\displaystyle \int_0^1 (\theta^\beta - \theta^\alpha) f(\theta) \mathrm{d}\theta}
=
\frac{\displaystyle \int_0^\delta \theta^\alpha f(\theta) \left(\theta^{\beta - \alpha} - 1\right) \mathrm{d}\theta}{\displaystyle \int_0^1 \theta^\alpha f(\theta) \left(\theta^{\beta - \alpha} - 1\right) \mathrm{d}\theta}.
\end{align*}
Notice that $\theta^{\beta - \alpha} - 1$ as a function of $\theta$ is positive and decreasing in $[0,1]$. Hence, by Lemma~\ref{lem:intcandy},
\begin{align*}
\frac{\displaystyle \int_0^\delta \theta^\alpha f(\theta) \left(\theta^{\beta - \alpha} - 1\right) \mathrm{d}\theta}{\displaystyle \int_0^1 \theta^\alpha f(\theta) \left(\theta^{\beta - \alpha} - 1\right) \mathrm{d}\theta}
\ge
\frac{\displaystyle \int_0^\delta \theta^\alpha  f(\theta) \mathrm{d}\theta}{\displaystyle \int_0^1 \theta^\alpha  f(\theta) \mathrm{d}\theta}.
\end{align*}
\end{proof}

\begin{lemma}\label{lem:step_1-t}
	Let $f \colon [0,1] \to \reals_{\ge0}$ be a function and $\delta \in (0,1)$. Then
	\begin{align*}
	\frac{\displaystyle \int_{0}^{\delta} \left(1-\theta^\alpha\right)f(\theta)\mathrm{d}\theta}{\displaystyle \int_{0}^{1} \left(1-\theta^\alpha\right)f(\theta)\mathrm{d}\theta}
	\le
	\frac{\displaystyle \int_{0}^{\delta} \left(1-\theta^\beta\right)f(\theta)\mathrm{d}\theta}{\displaystyle \int_{0}^{1} \left(1-\theta^\beta\right)f(\theta)\mathrm{d}\theta},
	\end{align*}
	if $\ 0 < \beta < \alpha \le 1$.
\end{lemma}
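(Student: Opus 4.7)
The plan is to derive this from Lemma~\ref{lem:intcandy} by a multiplicative factoring. For $\theta\in(0,1)$, write $(1-\theta^\beta) = g(\theta)\,(1-\theta^\alpha)$ where
$$g(\theta) \;=\; \frac{1-\theta^\beta}{1-\theta^\alpha},$$
extended continuously to $[0,1]$ (with $g(0)=1$ and $g(1)=\beta/\alpha$ by L'H\^opital). If $g$ is decreasing on $[0,1]$, then applying Lemma~\ref{lem:intcandy} with the nonnegative integrand $(1-\theta^\alpha)f(\theta)$ playing the role of $f$, and this $g$ playing the role of $g$, yields
$$\frac{\int_0^\delta (1-\theta^\alpha)f(\theta)\,g(\theta)\,\mathrm{d}\theta}{\int_0^1 (1-\theta^\alpha)f(\theta)\,g(\theta)\,\mathrm{d}\theta} \;\ge\; \frac{\int_0^\delta (1-\theta^\alpha)f(\theta)\,\mathrm{d}\theta}{\int_0^1 (1-\theta^\alpha)f(\theta)\,\mathrm{d}\theta}.$$
The left-hand side is, by construction, $\int_0^\delta (1-\theta^\beta)f\,\mathrm{d}\theta \,/\, \int_0^1 (1-\theta^\beta)f\,\mathrm{d}\theta$, which is exactly the desired inequality.

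The heart of the proof is therefore to establish that $g$ is decreasing. I would reduce this to a one-variable calculus fact via the substitution $u = \theta^\beta$, which is an increasing bijection of $(0,1)$ onto itself since $\beta>0$. Setting $r = \alpha/\beta > 1$ gives $\theta^\alpha = u^r$ and $g(\theta) = (1-u)/(1-u^r)$, so the monotonicity of $g$ in $\theta$ is equivalent to the strict monotonicity of $h(u) \;=\; (1-u^r)/(1-u)$ in $u$ on $(0,1)$. For integer $r$ this is immediate from the identity $h(u) = 1 + u + \cdots + u^{r-1}$; for general real $r>1$ a short differentiation gives $h'(u) = \psi(u)/(1-u)^2$ with $\psi(u) = 1 + (r-1)u^r - r u^{r-1}$, and $\psi>0$ on $(0,1)$ follows from $\psi(0)=1$, $\psi(1)=0$, and $\psi'(u) = r(r-1)u^{r-2}(u-1) < 0$.

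The main obstacle is really just picking the \emph{right} factoring. The more obvious additive decomposition $1-\theta^\alpha = (1-\theta^\beta) + (\theta^\beta - \theta^\alpha)$ combined with Lemma~\ref{lem:candy} (the scheme used for Lemma~\ref{lem:step_t}) leads here to an auxiliary inequality that is no easier than the original, because the auxiliary integrand $\theta^\beta - \theta^\alpha$ is neither a monomial nor a clean multiple of the reference weight. The multiplicative factoring above sidesteps that difficulty and reduces everything cleanly to the monotonicity of $h$, after which Lemma~\ref{lem:intcandy} closes the argument.
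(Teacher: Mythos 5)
Your proposal is correct, and its skeleton is exactly the paper's own proof: the paper also writes $1-\theta^\beta = g(\theta)\,(1-\theta^\alpha)$ with $g(\theta) = (1-\theta^\beta)/(1-\theta^\alpha)$, shows $g$ is decreasing on $[0,1]$, and then applies Lemma~\ref{lem:intcandy} to the nonnegative weight $(1-\theta^\alpha)f(\theta)$, precisely as you do. The only point of divergence is how the monotonicity of $g$ is established. The paper differentiates $g$ directly and factors the derivative as a positive quantity times $h_t(\beta)-h_t(\alpha)$, where $h_t(x) = (\theta^{-x}-1)/x$, and then claims $h_t'(x) = (1-\theta^{-x})/x^2 - (\ln\theta)\,\theta^{-x}/x > 0$; this positivity is asserted but is not immediate, since the first summand is negative for $\theta\in(0,1)$ (one needs, e.g., the inequality $1+e^s(s-1)>0$ for $s=-x\ln\theta>0$, proved by noting it vanishes at $s=0$ and has derivative $se^s>0$). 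Your substitution $u=\theta^\beta$, $r=\alpha/\beta>1$, which reduces the claim to the monotonicity of $h(u)=(1-u^r)/(1-u)$ and settles it via $\psi(u)=1+(r-1)u^r-ru^{r-1}$, $\psi(1)=0$, $\psi'(u)=r(r-1)u^{r-2}(u-1)<0$, is fully self-contained, each step being immediate, and it also recovers the geometric-sum intuition $h(u)=1+u+\cdots+u^{r-1}$ for integer $r$. So the two arguments buy the same result by the same route; yours is slightly more elementary exactly at the one spot where the paper's write-up glosses over a detail.
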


\begin{proof}
	Let $\displaystyle g(\theta) = \frac{1-\theta^\beta}{1-\theta^\alpha}$. We first show that $g(\theta)$ is decreasing in $[0,1]$. The derivative
	\begin{align*}
	g'(\theta)
	&=
	\frac{-\beta \theta^{\beta-1}(1-\theta^\alpha) - (1-\theta^\beta) (-\alpha \theta^{\alpha-1})}{\left(1-\theta^\alpha\right)^2}\\
	&=
	\alpha \beta \frac{\theta^{\alpha+\beta+1}}{\left(1-\theta^\alpha\right)^2}\left[\left(\frac{\theta^{-\beta}}{\beta} - \frac{1}{\beta}\right)- \left(\frac{\theta^{-\alpha}}{\alpha} - \frac{1}{\alpha}\right)\right].
	\end{align*}
	Let $\displaystyle h_t(x) = \frac{\theta^{-x}}{x} - \frac{1}{x}$ for $x \in (0,1]$. Notice that
	\begin{align*}
	h_t'(x) = \frac{1 - \theta^{-x}}{x^2} + \frac{-\ln \theta}{x}\theta^{-x} > 0
	\end{align*}
	for any $x \in (0,1]$, so $h_t(x)$ is an increasing function. Therefore,
	\begin{align*}
	g'(\theta) = \alpha \beta \frac{\theta^{\alpha+\beta+1}}{\left(1-\theta^\alpha\right)^2}\left[h_t(\beta) - h_t(\alpha)\right] < 0.
	\end{align*}
	
	Hence, by Lemma~\ref{lem:intcandy},
	\begin{align*}
    \frac{\displaystyle \int_{0}^{\delta} \left(1-\theta^\beta\right)f(\theta)\mathrm{d}\theta}{\displaystyle \int_{0}^{1} \left(1-\theta^\beta\right)f(\theta)\mathrm{d}\theta}
    &=
    \frac{\displaystyle \int_{0}^{\delta} \left(1-\theta^\alpha\right)f(\theta)g(\theta)\mathrm{d}\theta}{\displaystyle \int_{0}^{1} \left(1-\theta^\alpha\right)f(\theta)g(\theta)\mathrm{d}\theta}\\
    &\ge
    \frac{\displaystyle \int_{0}^{\delta} \left(1-\theta^\alpha\right)f(\theta)\mathrm{d}\theta}{\displaystyle \int_{0}^{1} \left(1-\theta^\alpha\right)f(\theta)\mathrm{d}\theta}.
	\end{align*}
\end{proof}

And here follows the last proof in the paper.
\begin{proof}[of Theorem~\ref{thm:moreconf}]
We only need to show that
	\begin{align*}
      \frac{\displaystyle \int_0^\delta \theta^{\alpha m} (1-\theta^\alpha)^n \mathrm{d}\theta}{\displaystyle\int_{0}^{1}  \theta^{\alpha m} (1-\theta^\alpha)^n\mathrm{d}\theta}
      \le
      \frac{\displaystyle \int_0^\delta \theta^{\beta m} (1-\theta^\beta)^n \mathrm{d}\theta}{\displaystyle\int_{0}^{1}  \theta^{\beta m} (1-\theta^\beta)^n\mathrm{d}\theta}
	\end{align*}
	holds for any $m,n \in \naturals$, $\delta \in (0,1)$ and $0 < \beta < \alpha \le 1$, which is obviously true by using Lemma~\ref{lem:step_t} for $m$ times and Lemma~\ref{lem:step_1-t} for $n$ times.
\end{proof}

\end{document}